\newtheorem{theorem}{Theorem}[section]
\newtheorem{corollary}{Corollary}[theorem]
\newtheorem{proposition}{Proposition}[section]
\theoremstyle{definition}
\newtheorem{definition}{Definition}[section]
\newtheorem{example}{Example}[section]
\theoremstyle{remark}
\newtheorem{remark}{Remark}[section]
\begin{document}

\title[The Mean Field Market Model Revisited]{The Mean Field Market Model Revisited}


\author*[1]{\fnm{Manuel} \sur{Hasenbichler}}\email{manuel.hasenbichler@tugraz.at}

\author[2]{\fnm{Wolfgang} \sur{Müller}}\email{w.mueller@tugraz.at}

\author[3]{\fnm{Stefan} \sur{Thonhauser}}\email{stefan.thonhauser@tugraz.at}

\affil[1,2,3]{\orgdiv{Institute of Statistics}, \orgname{Graz University of Technology}, \orgaddress{\street{Kopernikusgasse 24/III}, \city{Graz}, \postcode{8010}, \state{Styria}, \country{Austria}}}



\abstract{In this paper, we present an alternative perspective on the mean-field LIBOR market model introduced by Desmettre et al. in \cite{Desmettre}. Our novel approach embeds the mean-field model in a classical setup, but retains the crucial feature of controlling the term rate's variances over large time horizons. This maintains the market model's practicability, since calibrations and simulations can be carried out efficiently without nested simulations. In addition, we show that our framework can be directly  applied to model term rates derived from SOFR, ESTR or other nearly risk-free overnight short-term rates - a crucial feature since many IBOR rates are gradually being replaced. These results are complemented by a calibration study and some theoretical arguments which allow to estimate the probability of unrealistically high rates in the presented market models.}

\keywords{Overnight rates, LIBOR market model, Solvency II, McKean-Vlasov process}

\pacs[MSC Classification]{60G51, 60H10, 60J70, 91G05, 91G30}

\maketitle

\section*{Acknowledgments}
This research was funded in whole, or in part, by the Austrian Science Fund (FWF) P 33317. For the purpose of open access, the author has applied a CC BY public copyright licence to any Author Accepted Manuscript version arising from this submission. \bigskip

\section{Introduction and Motivation} \label{introduction}
In 2015, the Commission of the European Union amended the Solvency II\footnote{For further information regarding the Solvency II framework see {\url{https://www.eiopa.europa.eu/browse/regulation-and-policy/solvency-ii_en}} (Accessed: May 12, 2023).} Directive issued in 2009, which regulates all insurance businesses in the European Union (cf. \cite{2015Regulation}).
As a result, insurance companies operating in the European Union must assign market-consistent values to all balance sheet positions since the new regulation came into force in 2016. For the evaluation of liabilities - particularly in the context of life insurance contracts - this requires the market-consistent simulation of interest rates to determine the present value of future benefits, the so called \textit{best estimates}, which are agreed on in the corresponding insurance contracts (cf. \cite[Sec. 1]{GachHochgerner}). For the presentation and study of general cashflow models underlying insurance best estimates the interested reader is referred to \cite{GachHochgerner} and \cite{GachHochgerner2}.\\

Meanwhile, the Forward market model (or LIBOR market model), which models term rates for a finite and predetermined number of tenors, has gained increasing popularity over the last decades as it can be easily calibrated to market data. However, life insurance contracts can have maturities of up to 60 years, while liquid caplet and swaption prices, both of which are commonly used to calibrate the models, are typically only traded for maturities of twenty to thirty years in the future. For this reason, practitioners have to extrapolate from the model over a long period relative to the time horizon for which reliable market data is available. This practice, however, sometimes leads to the so-called phenomenon of \textit{term rate blow-ups}, especially if the model is calibrated in more turbulent market environments. In this piece of work, term rate blow-ups refer to the event where, at a certain future point in time, a significant proportion (e.g. more than $5$\%) of simulated term rates are unrealistically high (e.g. higher than $70$\%). Such issues, arising from both the general regulatory framework and from the application of this type of model in this context, have already been examined (e.g. cf. \cite{Verdani}, \cite{Desmettre}). In \cite{Desmettre}, Desmettre et al. propose to extend the classical Forward market model's dynamics to mean-field stochastic differential equations (MF-SDEs). This allows them to damp the coefficients driving the model's dynamics if the term rates' variances and hence the probability of blow-ups become too high. \\

To simulate forwards rates form their model, Desmettre et al. apply the particle approximation approach. However, this standard technique for the simulation of stochastic processes driven by MF-SDEs involves computationally expensive and time-consuming nested simulations and hence can be hardly applied for calibration purposes in practice. In Section \ref{sec:technical_results} of this work, we propose a new but equivalent "a posteriori" approach that enables practitioners to simulate from and calibrate the model in a time-efficient manner if the MF-SDE's coefficients depend only on time and a finite number of the  rates' moments as opposed to their entire distribution. In addition, we apply the ideas introduced in \cite{MercurioLyashenko} and \cite{MercurioLyashenko2} and formulate the results for "backward-looking term rates" that are e.g. stripped from overnight short-term rates such as SONIA, SOFR or ESTR too. In order to avoid arbitrariness (also cf. \cite[{Sec. 3.2.4}]{Verdani}), special emphasis is placed on the development of criteria to avoid overzealous reduction of the term rate’s variances that is inconsistent with market observations. In the appendix, we present a technique which allows us to a-priori estimate the effectiveness of a chosen dampening approach. \\

We start by introducing the stochastic basis that is adapted from \cite{Desmettre}. Let ${0 = T_0 < T_1 < \dots < T_N < \infty}$, $N \in \mathbb{N}$ be a prespecified set of settlement dates, $\left(\Omega,\mathcal{A},\mathcal{F},\mathbb{Q}^N\right)$ be a filtered probability space supporting Brownian motion and suppose that the filtration $\mathcal{F}$ fulfills the usual conditions\footnote{i.e. $\mathcal{F}$ is right-continuous and $\mathbb{Q}^N$-complete.} with $\mathcal{A} = \mathcal{F}_{T_N}$. Moreover, denote $\mathbb{Q}^i$ the $T_i$-forward measure for all $1\leq i\leq N$, $\mathcal{P}\left(\mathbb{R}\right)$ the set of probability measures on $\mathbb{R}$ and define \[\mathcal{P}_k\left(\mathbb{R}\right) = \left\{\mu \in \mathcal{P}\left(\mathbb{R}\right): \int_\mathbb{R} x^k \, \mathrm{d}\mu(x) < \infty\right\}, \quad k \in \mathbb{N}.\]
As proposed in \cite{Desmettre}, we model the term rates ${\{F_i\}_{i=1}^N := \{t \mapsto F(t,T_{i-1},T_i)\}_{i=1}^N}$ associated with this tenor structure by  
\begin{equation} \label{equ:fwd_dynamics}
\begin{cases}
\mathrm{d}F_i(t) = F_i(t)\sigma_i\left(t, \mu_t^i\right)^T \mathrm{d}W_t^i \quad 0 \leq t \leq T_{i-1}, \\
F_i(0) = F\left(0,T_{i-1},T_i\right) > 0, 
\end{cases}
\end{equation}
if the term rates are forward-looking (e.g. IBORs such as EURIBOR or LIBOR) and
\begin{equation} \label{equ:bwd_dynamics}
\begin{cases}
\mathrm{d}F^B_i(t) = F^B_i(t)\sigma^B_i\left(t, \mu_t^i\right)^T \mathrm{d}W_t^i \quad 0 \leq t \leq T_i, \\
F^B_i(0) = F^B\left(0,T_{i-1},T_i\right) > 0, 
\end{cases}
\end{equation}
if the term rates are backward-looking (e.g. term rates stripped from overnight short-term rates such as SONIA, SOFR or ESTR). Hereby, $W^i$ is $\mathbb{Q}^i$ $d$-dimensional Brownian motion, $\mu_t^i \in \mathcal{P}_k\left(\mathbb{R}\right)$ is the law of $F_i(t)$ respectively $F^B_i(t)$ under $\mathbb{Q}^i$ and \[\sigma_i: [0,T_{i-1}]\times\mathcal{P}_k\left(\mathbb{R}\right)\to\mathbb{R}^d \] in the first case and 
\[\sigma^B_i: [0,T_i]\times\mathcal{P}_k\left(\mathbb{R}\right)\to\mathbb{R}^d\]
in the latter case. In this paper, we mostly consider $d = N$ for the purpose of simplicity. If we denote by $R(T_{i-1},T_i) = F_i(T_{i-1})$ the forward-looking interest that is accrued in $[T_{i-1},T_i]$ and by $R^B(T_{i-1},T_i) = F^B_i(T_i)$ the backward-looking interest that is accrued in $[T_{i-1},T_i]$, backward and forward-looking interest rates conceptionally only differ in their measurability in lognormal market models. In fact, Lyashenko and Mercurio show the following proposition in \cite{MercurioLyashenko}: \\

\begin{proposition}{\cite[Sec. 2.4]{MercurioLyashenko}} \label{prop:equivalence_fwd_bwd_rates} ~\\
Let $0 < S < T$. Then, $F^B(t,S,T) = F(t,S,T) \quad \forall \, t \in [0,S]$. \bigskip
\end{proposition}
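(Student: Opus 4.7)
My plan is to unpack the two definitions and show that, at any time $t \leq S$, no overnight fixings in the accrual window $[S,T]$ have occurred yet, so both objects must agree with the classical bond-price formula $\frac{1}{T-S}\bigl(\frac{P(t,S)}{P(t,T)}-1\bigr)$. Concretely, I would first recall that the forward-looking rate is by definition
\[
F(t,S,T) \;=\; \frac{1}{T-S}\left(\frac{P(t,S)}{P(t,T)}-1\right),\qquad t\in[0,S],
\]
so the task reduces to deriving the same expression for $F^B(t,S,T)$ on $[0,S]$.

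Next, I would invoke the construction of the backward-looking rate from Lyashenko--Mercurio: on $[0,T]$ the process $F^B(\cdot,S,T)$ is defined as the $\mathbb{Q}^T$-martingale
\[
F^B(t,S,T) \;=\; \mathbb{E}^{\mathbb{Q}^T}\!\left[\left.\frac{1}{T-S}\Bigl(\exp\!\Bigl(\textstyle\int_S^T r_u\,\mathrm{d}u\Bigr)-1\Bigr)\,\right|\,\mathcal{F}_t\right],
\]
whose terminal value at $T$ matches the realised compounded setup rate over $[S,T]$. The whole proof then hinges on evaluating the conditional expectation $\mathbb{E}^{\mathbb{Q}^T}[\exp(\int_S^T r_u\,\mathrm{d}u)\mid\mathcal{F}_t]$ when $t\leq S$.

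For that step I would switch to the risk-neutral measure via the standard density $\mathrm{d}\mathbb{Q}^T/\mathrm{d}\mathbb{Q}\propto e^{-\int_0^T r_u\,\mathrm{d}u}/P(0,T)$, which gives the Bayes formula
\[
\mathbb{E}^{\mathbb{Q}^T}\!\left[e^{\int_S^T r_u\,\mathrm{d}u}\,\big|\,\mathcal{F}_t\right] \;=\; \frac{1}{P(t,T)}\,\mathbb{E}^{\mathbb{Q}}\!\left[e^{-\int_t^S r_u\,\mathrm{d}u}\,\big|\,\mathcal{F}_t\right] \;=\; \frac{P(t,S)}{P(t,T)},
\]
where the cancellation of the $\int_S^T r_u\,\mathrm{d}u$ term is exactly what requires $t\leq S$. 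Plugging this back into the martingale representation of $F^B$ yields the same expression as $F(t,S,T)$, which closes the argument.

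The only subtlety, and therefore the main obstacle, is bookkeeping around the change of numeraire and the restriction $t\leq S$: one has to be explicit that because the accrual window has not yet started, the exponential $e^{\int_S^T r_u\,\mathrm{d}u}$ is not $\mathcal{F}_t$-measurable and must be handled through the forward measure rather than simply ``pulled out''. Once this is made precise, the rest of the proof is a direct calculation with no additional hypotheses beyond those already built into the stochastic basis.
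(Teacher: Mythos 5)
Your argument is correct, and it is worth noting that the paper itself offers no proof of this proposition: it is quoted verbatim from Lyashenko and Mercurio (Sec.~2.4 of the cited reference), and your computation is essentially the argument given there. Writing the backward-looking rate as the $\mathbb{Q}^T$-martingale $F^B(t,S,T)=\mathbb{E}^{\mathbb{Q}^T}\bigl[\tfrac{1}{T-S}\bigl(e^{\int_S^T r_u\,\mathrm{d}u}-1\bigr)\,\big|\,\mathcal{F}_t\bigr]$ and using Bayes' rule with the density $e^{-\int_0^T r_u\,\mathrm{d}u}/\bigl(P(0,T)\bigr)$ to obtain $P(t,S)/P(t,T)$ for $t\leq S$ is precisely the standard change-of-numeraire proof, and you correctly identify that the cancellation $e^{\int_S^T r_u\,\mathrm{d}u}\,e^{-\int_t^T r_u\,\mathrm{d}u}=e^{-\int_t^S r_u\,\mathrm{d}u}$ is where $t\leq S$ enters. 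The only caveats to make explicit are that this argument lives in a richer setting than the paper's discrete-tenor framework --- you need an instantaneous (overnight) short rate $r$, a risk-neutral measure $\mathbb{Q}$, and enough integrability for the conditional Bayes formula --- and that identifying the realised compounded in-arrears rate with $\tfrac{1}{T-S}\bigl(e^{\int_S^T r_u\,\mathrm{d}u}-1\bigr)$ is the continuous-compounding modeling convention of Lyashenko--Mercurio rather than a theorem about discretely compounded daily fixings; with those assumptions stated, your proof is complete.
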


Thus, we refrain from stating analogous results and omit the superscript $B$ in the following.\\

As in the classical theory on the Forward market model, the various forward measures are linked by the change of measure processes
\[ \frac{\mathrm{d}\mathbb{Q}^{i-1}}{\mathrm{d}\mathbb{Q}^{i}} = \exp\left(\int_0^\cdot \frac{\Delta_i F_i(s)}{1 + \Delta_i F_i(s)} \left(\sigma_i\left(s,\mu_s^i\right)\right)^T\mathrm{d}W_s^i - \frac{1}{2}\int_0^\cdot \left|\frac{\Delta_i F_i(s)}{1 + \Delta_i F_i(s)} \sigma_i\left(s,\mu_s^i\right)\right|^2\mathrm{d}s \right),\]
assuming that the term rates $\{F_i\}_{i=1}^N$ exist on $\left(\Omega,\mathcal{A},\mathcal{F}\right)$. We recall that this ensures that there exists a probability measure $\mathbb{Q}^* \sim \mathbb{Q}^N$ such that the price at time $T_k$ of any $\mathcal{F}_{T_N}$-measurable and attainable claim $X$ is \[\pi_X(T_k) = B^*(T_k) \, \mathbb{E}_{\mathbb{Q}^*}\left[\frac{X}{B^*(T_N)}\middle|\mathcal{F}_{T_k}\right] \quad \forall \, 1 \leq k \leq N\] where \[ \begin{cases}
B^*(0) = 1, \\
B^*(T_i) = \left(1 + \Delta_i F_i(T_i)\right))B^*(T_{i-1}), \quad \forall \, 1 \leq i \leq N
\end{cases} \] is the discrete-time implied bank account process (e.g. c.f. \cite[Chpt. 11]{Filipovic}).
In the case of backward-looking term rates, this bank account process is no-longer previsible but still monotonically increasing and adapted with respect to $\left\{\mathcal{F}_{T_i}\right\}_{i=1}^N$. \bigskip

\section{Theoretical Considerations} \label{sec:technical_results}
If the instantaneous volatilities $\sigma_i\left(t,\mu_t^i\right)$ are only functions of time and a finite number of moments of the $i$-th term rate and do not depend on the entire distribution $\mu^i$ ${\forall \, 1 \leq i \leq N}$, damping effects can already be achieved in the classical Forward market model framework, as Theorem \ref{bwd_special_existence} below suggests. In particular, it proposes a method that allows practitioners to achieve the same damping effects as presented by \cite{Desmettre} without performing time-consuming nested simulations. Nonetheless, the ordinary differential equations (ODEs) arising from this method need to be solved numerically. Therefore, in the course of this section, we additionally propose a slightly different but equivalent damping approach for which the solutions of these ordinary differential equations can usually be computed explicitly. This allows the easy-to-implement simulation of forward- and backward-looking term rates.

\subsection{Existence and Uniqueness}
We show that the existence of the term rates $\left\{F_i\right\}_{i=1}^N$ in our special model is intertwined with the existence of solutions to a set of ODEs. \\

\begin{theorem} \label{bwd_special_existence}~\\
Suppose that for all $1 \leq i \leq N$ the function $\ \displaystyle{\sigma_i: [0,T_i] \times \mathcal{P}_k\left(\mathbb{R}\right) \to \mathbb{R}^d}$ has the form \[\sigma_i\left(t,\mu_t^i\right) = \lambda_i\left(t,\mathbb{E}_{\mathbb{Q}^i}\left[\left(F_i(t)\right)^2\right],\dots,\mathbb{E}_{\mathbb{Q}^i}\left[\left(F_i(t)\right)^k\right]\right)\] where ${\lambda_i:  [0,T_i] \times \left(\mathbb{R}^+\right)^{k-1}} \to \mathbb{R}^d$ is measurable and its length $\left|\lambda_i\right|$ is piecewisely continuous in every coordinate and non-zero. Then, the following statements hold:
\begin{enumerate}
\item Let $\left(F_1,\dots,F_N\right)$ be a solution of (\ref{equ:bwd_dynamics}) on $\left(\Omega,\mathcal{F},\mathbb{Q}^N\right)$ and suppose $t \mapsto \left|\sigma_i(t,\mu_t^i)\right|$ is bounded on $[0,T_i]$. Then, for every $1 \leq j \leq N$ the function $\displaystyle \psi_i^j(t) := \mathbb{E}_{\mathbb{Q}^i}\left[F_i(t)^j\right]$ is the unique continuous as well as piecewisely differentiable solution of
\begin{equation} \label{ODE:psi_gen_2}
\begin{cases}
\frac{\mathrm{d}}{\mathrm{d}t}\psi_i^j(t) = \frac{j(j-1)}{2} \psi_i^j(t) \left| \lambda_i(t,\psi_i^2(t),\dots,\psi_i^k(t))\right|^2 \quad \left(t \in \mathcal{D}_i\cap(0,T_i)\right), \\ 
\psi_i^j(0) = \left(F_i(0)\right)^j,
\end{cases}
\end{equation}
where $\displaystyle \mathcal{D}_i := \left\{t \, : \, \psi_i^j \text{ is differentiable in } t \text{ for all } 2 \leq j \leq k\right\}$ for all $1 \leq i \leq N$.
\item Suppose there exist continuous and piecewisely differentiable $\left(\psi_i^j\right)_{j=2}^k$ for all ${1 \leq i \leq N}$ such that for all $i,j$ the function ${\psi_i^j: [0,T_i] \to \mathbb{R}^+}$ satisfies 
\begin{equation} \label{ODE:psi_gen_1}
\begin{cases}
\frac{\mathrm{d}}{\mathrm{d}t}\psi_i^j(t) = \frac{j(j-1)}{2} \psi_i^j(t) \left| \lambda_i(t,\psi_i^2(t),\dots,\psi_i^k(t))\right|^2 \quad \left(t \in \mathcal{D}_i\cap(0,T_i)\right), \\ 
\psi_i^j(0) = \left(F_i(0)\right)^j,
\end{cases}
\end{equation}
where $\displaystyle \mathcal{D}_i := \left\{t \, : \, \psi_i^j \text{ is differentiable in } t \text{ for all } 2 \leq j \leq k\right\}$.\\
Then, for all $1 \leq i \leq N$ there exists an unique strong solution to (\ref{equ:bwd_dynamics}) on $\left(\Omega,\mathcal{F},\mathbb{Q}^N\right)$ and for all $i,j$ \begin{equation}\label{eq:psi} \mathbb{E}_{\mathbb{Q}^i}\left[F_i(t)^j\right] = \psi_i^j(t) \quad 0 \leq t \leq T_i.\end{equation}
\end{enumerate}
\end{theorem}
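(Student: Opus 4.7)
The plan is to decouple the two directions of the equivalence through the key structural observation that, when $\sigma_i$ depends only on a finite family of moments of $F_i$ which are themselves deterministic functions of time, the SDE (\ref{equ:bwd_dynamics}) reduces under $\mathbb{Q}^i$ to a geometric Brownian motion with the deterministic volatility $\tilde\lambda_i(t) := \lambda_i(t,\psi_i^2(t),\ldots,\psi_i^k(t))$. This observation drives both halves of the proof and explains why the moments can be computed in closed form.

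For part (1), I would fix $1\le i\le N$ and apply It\^o's formula to $F_i(t)^j$ using the SDE under $\mathbb{Q}^i$, obtaining
\[ \mathrm{d}F_i(t)^j = j F_i(t)^j\, \sigma_i(t,\mu_t^i)^T \mathrm{d}W_t^i + \tfrac{1}{2} j(j-1) F_i(t)^j |\sigma_i(t,\mu_t^i)|^2 \mathrm{d}t. \]
The hypothesis that $t\mapsto|\sigma_i(t,\mu_t^i)|$ is bounded on $[0,T_i]$, combined with the standard fact that a geometric Brownian motion with bounded volatility admits all moments, ensures that the stochastic integral part is a true $\mathbb{Q}^i$-martingale. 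Taking $\mathbb{Q}^i$-expectation and exploiting that $\sigma_i(t,\mu_t^i)=\lambda_i(t,\psi_i^2(t),\ldots,\psi_i^k(t))$ is deterministic yields the integral form of (\ref{ODE:psi_gen_2}); continuity of $\psi_i^j$ then follows from the integral representation, piecewise differentiability from the piecewise continuity of $|\lambda_i|$ in each coordinate, and uniqueness from a Gr\"onwall-type argument applied to the coupled system $(\psi_i^2,\ldots,\psi_i^k)$ on the intervals of smoothness.

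For part (2), given continuous and piecewise differentiable $\psi_i^j$ solving (\ref{ODE:psi_gen_1}), I would set $\tilde\lambda_i(t) := \lambda_i(t,\psi_i^2(t),\ldots,\psi_i^k(t))$ and take as candidate
\[ F_i(t) := F_i(0)\exp\!\left(\int_0^t \tilde\lambda_i(s)^T \mathrm{d}W_s^i - \tfrac{1}{2}\int_0^t |\tilde\lambda_i(s)|^2 \mathrm{d}s\right), \]
which is the unique strong solution of the SDE with deterministic volatility $\tilde\lambda_i$. Using the Gaussian moment generating function one obtains $\mathbb{E}_{\mathbb{Q}^i}[F_i(t)^j] = (F_i(0))^j \exp\bigl(\tfrac{j(j-1)}{2}\int_0^t |\tilde\lambda_i(s)|^2 \mathrm{d}s\bigr)$, and a direct integration of (\ref{ODE:psi_gen_1}) delivers the exact same expression, so $\mathbb{E}_{\mathbb{Q}^i}[F_i(t)^j]=\psi_i^j(t)$ identically on $[0,T_i]$. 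This a-posteriori identifies $\tilde\lambda_i(t)$ with $\sigma_i(t,\mu_t^i)$ and shows that the constructed $F_i$ genuinely solves (\ref{equ:bwd_dynamics}). To assemble all the $F_i$ on the common space $(\Omega,\mathcal{F},\mathbb{Q}^N)$, I would proceed inductively: construct $F_N$ under $\mathbb{Q}^N$, define $\mathbb{Q}^{N-1}$ via the Radon-Nikodym density displayed in the introduction, construct $F_{N-1}$ under $\mathbb{Q}^{N-1}$ on the same space, and iterate downward through the tenor structure.

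The main obstacle, I expect, will be two-fold. First, justifying the true-martingale property in part (1) requires controlling the moments of a lognormal-type process whose volatility depends on those very moments; the natural remedy is an a-priori bound obtained from (\ref{ODE:psi_gen_2}) itself through a Gr\"onwall step. Second, the piecewise structure has to be handled carefully: discontinuities of $|\lambda_i|$ propagate through the coupled ODE system, and one must verify that the ODE and SDE arguments remain consistent on the exceptional null set $[0,T_i]\setminus\mathcal{D}_i$ rather than losing uniqueness or regularity there.
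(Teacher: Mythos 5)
Your proposal is correct and follows essentially the same route as the paper: both halves hinge on the observation that under $\mathbb{Q}^i$ the process is a geometric Brownian motion with the deterministic volatility $\tilde\lambda_i(t)=\lambda_i(t,\psi_i^2(t),\dots,\psi_i^k(t))$, so its moments have the closed form $F_i(0)^j\exp\bigl(\tfrac{j(j-1)}{2}\int_0^t|\tilde\lambda_i|^2\bigr)$, which is differentiated to yield the ODE in part (1) and matched against the postulated $\psi_i^j$ in part (2). The only cosmetic difference is that the paper reads the moments directly off the explicit exponential solution and the Gaussian moment generating function instead of passing through It\^o's formula and a martingale argument, which sidesteps the integrability issue you flag; your additional Gr\"onwall and backward-induction remarks flesh out steps the paper leaves to the cited classical results.
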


\begin{proof}~
\begin{enumerate}
\item By construction,
\[ F_i(t)^j = F_i(0)^j \, \exp\left(j \, \int_0^t\sigma_i(s,\mu_s^i) \mathrm{d}W_s^i - \frac{j}{2} \int_0^t\sigma_i(s,\mu_s^i)^2 \mathrm{d}s\right)\]
under $\mathbb{Q}^i$, where \[\sigma_i\left(t,\mu_t^i\right) = \lambda_i\left(t,\mathbb{E}_{\mathbb{Q}^i}\left[F_i(t)^2\right],\dots,\mathbb{E}_{\mathbb{Q}^i}\left[F_i(t)^k\right]\right).\] 
For $2 \leq j \leq k$ set $\psi_i^j(t) := \mathbb{E}_{\mathbb{Q}^i}\left[F_i(t)^j\right]$. Then,
\begin{equation} \label{equ:psi_thm}
\psi_i^j(t) = F_i(0)^j \exp\left(\frac{j(j-1)}{2}\int_0^t \left|\lambda_i\left(s,\psi_i^j(s),\dots,\psi_i^k(s)\right)\right|^2 \mathrm{d}s\right).
\end{equation}
Since $t \mapsto \sigma_i(t,\mu_t^i)$ is bounded on $[0,T_{i}]$, $\psi_i^j$ are continuous on $\left[0,T_{i}\right]$. In addition, $\left(\psi_i^j\right)_{j=2}^k$ are piecewisely differentiable since $\left|\lambda_i\right|$ is piecewisely continuous in every coordinate. Differentiating the above expression w.r.t. $t \in \mathcal{D}_i$ yields (\ref{ODE:psi_gen_2}). \\

\item Define $\displaystyle \tilde{\sigma}_i(t) := \lambda_i\left(t,\psi_i^2(t),\dots,\psi_i^k(t)\right)$ for all $1 \leq i \leq N$ where $\left(\psi_i^j\right)_{j=2}^k$ are the continuous and piecewisely differentiable functions that solve (\ref{ODE:psi_gen_1}). It is well-known (e.g. c.f. \cite{MusielaRutkowski}, Ch. 12.4) that there exist unique strong solutions of 
\[ \begin{cases}
\mathrm{d}F_i(t) = F_i(t)\tilde{\sigma}_i(t)^T \mathrm{d}W_t^i \quad 0 \leq t \leq T_i, \\
F_i(0) = F_i\left(0,T_{i-1},T_i\right) > 0, 
\end{cases} \]
on $\left(\Omega,\mathcal{F},\mathbb{Q}^N\right)$ since the $\tilde{\sigma}_i$ are bounded and piecewisely continuous on $[0,T_{i}]$.
It remains to show that $\psi_i^j(t) = \mathbb{E}_{\mathbb{Q}^i}\left[F_i(t)^j\right]$ on $[0,T_{i}]$ for all $i$ and $j$. \\

By construction,
\[ F_i(t)^j = F_i(0)^j \, \exp\left(j \, \int_0^t \tilde{\sigma}_i(s) \mathrm{d}W_s^i - \frac{j}{2} \int_0^t\tilde{\sigma}_i(s)^2 \mathrm{d}s\right)\]
under $\mathbb{Q}^i$ and therefore \begin{equation}\label{proof:moments_2} \mathbb{E}_{\mathbb{Q}^i}\left[F_i(t)^j\right] = F_i(0)^j \exp\left(\frac{j(j-1)}{2}\int_0^t \left|\tilde{\sigma}_i(s)\right|^2 \mathrm{d}s\right).\end{equation}
Since the right hand side of this equation is continuous and piecewisely differentiable in $t$, so is $\mathbb{E}_{\mathbb{Q}^i}\left[F_i(t)^j\right]$. What is more, \[\mathbb{E}_{\mathbb{Q}^i}\left[F_i(0)^j\right] = F_i(0)^j = \psi_i^j(0)\] and note that $\mathbb{E}_{\mathbb{Q}^i}\left[F_i(t)^j\right]$ is differentiable in $t$ iff $\left|\lambda_i\left(t,\psi_i^2(t),\dots,\psi_i^k(t)\right)\right|^2$ is continuous in $t$. Hence, $\displaystyle \mathcal{D}_i = \left\{t \, : \,  \mathbb{E}_{\mathbb{Q}^i}\left[F_i(t)^j\right] \text{ is differentiable in } t\right\}.$ By differentiating (\ref{proof:moments_2}) in $\in \mathcal{D}_i$, one finds that $\mathbb{E}_{\mathbb{Q}^i}\left[F_i(0)^j\right]$ is the unique solution of (\ref{ODE:psi_gen_1}). \qedhere
\end{enumerate} 
\end{proof}

\begin{remark}
    Equation (\ref{equ:psi_thm}) shows that the damping of higher moments ($k > 2$) is equivalent to the damping of the second moments of the forward rates.
\end{remark} \bigskip

We primarily strive to present a method to mitigate interest rate blow-ups when employing the Forward market model. By the last remark, it is sufficient to consider the case $k=2$ of Theorem \ref{bwd_special_existence}. Similar to the approach developed in \cite{Desmettre}, we use the term rates' total variance $\phi_i$ instead of $\psi_i^2$ as a control variable to reduce blow-ups. $\phi_i$ classically arises in option trading and corresponds to $T_i \cdot\left(\sigma_i^{BS}\right)^2$, where $\sigma_i^{BS}$ is the Black implied $F_i$-caplet volatility. In our log-normal model setting, 
\begin{equation} \label{equ:phi}
\phi_i(t) = \log\left(\frac{\psi_i^2(t)}{F_i(0)^2}\right) = \log\left(\frac{\mathbb{E}_{\mathbb{Q}^i}\left[F_i(t)^2\right]}{F_i(0)^2}\right) = \int_0^t \left|\sigma_i(s,\mu_s^i)\right|\mathrm{d}s.
\end{equation}
This is just a reparametrisation. Note that in terms of $\phi_i$, Equation (\ref{ODE:psi_gen_1}) is equivalent to 
\begin{equation} \label{equ:problem_reparametrised}
   \begin{cases}
        \frac{\mathrm{d}}{\mathrm{d}t}\phi_i(t) = \frac{1}{\psi_i^2(t)}\frac{\mathrm{d}}{\mathrm{d}t}\psi_i^2(t) = \lvert\sigma_i(t,\mu_t^i)\rvert^2 \quad \left(t \in \mathcal{D}_i\cap(0,T_i)\right), \\ 
        \phi_i(0) = 0.
    \end{cases} 
\end{equation}
The following Corollary provides a practically useful choice of $\sigma_i(t,\mu_t^i)$, which allows to so solve Equation (\ref{equ:problem_reparametrised}) by separating variables. \\

\begin{corollary} \label{bwd_corollary_special_existence}~\\
Suppose that for all ${1 \leq i \leq N}$ the function $\displaystyle \sigma_i: [0,T_i] \times \mathcal{P}_2\left(\mathbb{R}\right) \to \left(\mathbb{R}^+\right)^d$ has the form
\[\sigma_i\left(t,\mu_t^i\right) = g_i(t) f_i\left(\phi_i(t)\right) v_i\left(t,\phi_i(t)\right)\] 
with $\phi_i$ as in (\ref{equ:phi}), where $g_i: [0,T_i] \to \mathbb{R}^+$, $f_i:  \mathbb{R}_0^+ \to \mathbb{R}^+$ are left-continuous as well as piecewisely continuous and $v_i: [0, T_i] \times \mathbb{R}_0^+ \to \mathcal{S}^{d-1} = \left\{x \in \mathbb{R}^d \, : \, \lvert x \rvert_2 = 1 \right\}$ is left-continuous. 
In addition, assume that \begin{equation} \label{equ:inequ_cor} \int_0^{T_{i}} g_i(s)^2 \mathrm{d}s \leq \int_0^\infty \frac{1}{f_i(z)^2} \mathrm{d}z \quad \forall \, 1 \leq i \leq N.\end{equation} 
Then, $F_i$ exists as an unique strong solution to (\ref{equ:bwd_dynamics}). Furthermore, 
\begin{equation} \label{bwd_explicit_variance}
\phi_i(t) = V_i\left(\int_0^t g_i(s)^2 \mathrm{d}s\right)
\end{equation}
where \[V_i^{(-1)}(x) = \int_0^x f_i(z)^{-2} \mathrm{d}z.\]
\end{corollary}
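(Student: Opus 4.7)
The plan is to build $\phi_i$ directly from the explicit formula and then verify that the resulting $\psi_i^2(t) = F_i(0)^2 \exp(\phi_i(t))$ satisfies the ODE in part 2 of Theorem \ref{bwd_special_existence} (with $k=2$); existence and uniqueness of the strong solution to (\ref{equ:bwd_dynamics}) then follows immediately from that theorem. By the remark following Theorem \ref{bwd_special_existence}, treating only the second moment is sufficient.

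First, I would analyse the candidate function $V_i^{(-1)}(x) = \int_0^x f_i(z)^{-2}\,\mathrm{d}z$. Since $f_i$ is positive and piecewisely continuous, $1/f_i^2$ is locally integrable, so $V_i^{(-1)}$ is strictly increasing, continuous, and absolutely continuous on its domain. Hence it admits a well-defined, strictly increasing, continuous inverse $V_i$ on the range $[0,M)$ with $M := \int_0^\infty f_i(z)^{-2}\,\mathrm{d}z \in (0,\infty]$. The function $t \mapsto G_i(t) := \int_0^t g_i(s)^2\,\mathrm{d}s$ is non-decreasing, continuous, and (by piecewise continuity of $g_i$) piecewisely $C^1$ with $G_i'(t) = g_i(t)^2$ at every point of continuity of $g_i$. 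Condition (\ref{equ:inequ_cor}) guarantees that $G_i([0,T_i]) \subseteq [0,M)$ (treating the boundary case by continuity if equality holds), so $\phi_i(t) := V_i(G_i(t))$ is well-defined, non-negative, continuous, and piecewisely differentiable on $[0,T_i]$.

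Next, at any $t$ where both $G_i$ and $V_i$ are differentiable at $G_i(t)$, the chain rule gives
\[
\phi_i'(t) \;=\; V_i'(G_i(t))\cdot g_i(t)^2 \;=\; f_i(\phi_i(t))^2 \, g_i(t)^2 \;=\; \lvert \sigma_i(t,\mu_t^i)\rvert^2,
\]
where in the last step I use that $v_i$ takes values in the unit sphere $\mathcal{S}^{d-1}$. Because $f_i, g_i$ and $v_i$ are (left-)piecewisely continuous, this identity holds off a finite set of points, so $\phi_i$ satisfies the reparametrised ODE (\ref{equ:problem_reparametrised}) together with the initial condition $\phi_i(0) = V_i(0) = 0$. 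Setting $\psi_i^2(t) := F_i(0)^2 \exp(\phi_i(t))$ yields a continuous, piecewisely differentiable, strictly positive function that, by direct differentiation, solves (\ref{ODE:psi_gen_1}) for $j=2$ on $\mathcal{D}_i \cap (0, T_i)$ with $\psi_i^2(0) = F_i(0)^2$. Theorem \ref{bwd_special_existence}(2) now applies, delivering the unique strong solution of (\ref{equ:bwd_dynamics}) together with $\mathbb{E}_{\mathbb{Q}^i}[F_i(t)^2] = \psi_i^2(t)$, so $\phi_i$ coincides with the total variance in (\ref{equ:phi}) and the explicit representation (\ref{bwd_explicit_variance}) holds.

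The only real subtlety I expect is the boundary behaviour in (\ref{equ:inequ_cor}): if $G_i(T_i) = M < \infty$, then $V_i$ must be extended continuously to the endpoint, which is possible precisely because $V_i^{(-1)}$ is continuous and monotone; apart from this book-keeping, the argument is a clean separation of variables combined with the abstract existence result of Theorem \ref{bwd_special_existence}.
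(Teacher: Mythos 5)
Your proof is correct and takes essentially the same route as the paper: invert $V_i^{(-1)}(x)=\int_0^x f_i(z)^{-2}\,\mathrm{d}z$, define $\phi_i = V_i\circ G_i$, check by the chain rule (i.e.\ separation of variables) that it solves the reparametrised ODE (\ref{equ:problem_reparametrised}), and feed the resulting $\psi_i^2$ into Theorem~\ref{bwd_special_existence}(2). You spell out the intermediate steps the paper compresses into ``separation of variables shows (\ref{bwd_explicit_variance}) solves (\ref{equ:problem_reparametrised})'' and you flag, somewhat more candidly than the paper, that when (\ref{equ:inequ_cor}) holds with equality and $\int_0^\infty f_i^{-2}\,\mathrm{d}z<\infty$ the endpoint $t=T_i$ requires a limiting argument — a point the paper glosses over too.
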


\begin{proof}
By assumption, $f_i>0$ and so, $x \mapsto \int_0^x f_i(z)^{-2} \mathrm{d}z$ is well-defined, continuous and strictly monotonically increasing. Thus, its inverse ${V_i: [0,y_i) \to [0,\infty)}$ with ${y_i := \int_0^\infty f_i(z)^{-2} \mathrm{d}z \leq \infty}$ exists and, given (\ref{equ:inequ_cor}), is well-defined on $\left[0,\int_0^{T_i} g_i(s)^2 \mathrm{d}s\right]$. Separation of variables shows that (\ref{bwd_explicit_variance}) solves (\ref{equ:problem_reparametrised}). Then, Theorem (\ref{bwd_special_existence}) yields the claim. \qedhere
\end{proof}

\begin{remark} \label{remark_factors}
While the $\left\{g_i\right\}_{i=1}^N$ govern the principle volatility structure over time and are as such calibrated to market data as will be shown in Section \ref{subsection:calibration}, the $\left\{f_i\right\}_{i=1}^N$ are responsible for damping whenever the control variables $\left\{\phi_i\right\}_{i=1}^N$ become too large. The $\left\{v_i\right\}_{i=1}^N$, on the other hand, determine the term rates' correlations. Thus, the $\left\{g_i\right\}_{i=1}^N$ are referred to as \textit{principle factors}, the $\left\{v_i\right\}_{i=1}^N$ are called \textit{correlation factors} and the $\left\{f_i\right\}_{i=1}^N$ are referred to as \textit{damping factors} of the instantaneous volatilities $\left\{\sigma_i(t,\mu_t^i)\right\}_{i=1}^N$ respectively. Note that $f_i \equiv 1$ ${\forall \, 1 \leq i \leq N}$ yields the classical Forward market model framework.
\end{remark}

\subsection{Total Implied Variance Structures} \label{subsection:total_implied_volatility_structures}
We need to choose $\left\{f_i\right\}_{i=1}^N$ in Corollary \ref{bwd_corollary_special_existence} in a way such that we can control the damping of the total implied variances $\left\{\phi_i\right\}_{i=1}^N$. We achieve this by specifying a suitable class of functions for $\left\{V_i\right\}_{i=1}^N$ in (\ref{bwd_explicit_variance}) which allows the easy and explicit computation of $\mathbb{E}_{\mathbb{Q}^i}\left[F_i(t)^2\right]$. To this end, we start with three definitions. \\

\begin{definition}
$k: \mathbb{R}_0^+ \to \mathbb{R}_0^+$ is a \textit{(total implied variance structure) segment} if
\begin{enumerate}
\item $k$ is strictly monotonically increasing and concave or convex,
\item $k$ is continuously differentiable on $\mathbb{R}^+$,
\item $k(0) = 0$.
\end{enumerate}
\end{definition}

\begin{definition} \label{def:total_implied_variance_structure}
A bijective map $V: \mathbb{R}^+_0 \to \mathbb{R}^+_0$ is \textit{induced by segments} $\left(k_i\right)_{i=0}^M$ for some $M \in \mathbb{N}_0$ if there exist strictly monotonically increasing \textit{(total implied variance) thresholds} ${0 = \tau_0 < \dots < \tau_{M+1}=\infty}$ such that
\[ V(y) = \sum_{i=0}^{M} \mathbb{1}_{[\tau_i,\infty)}(y) \, k_i(y \wedge \tau_{i+1} - \tau_i). \] 
We call such a function $V$ \textit{total implied variance structure}. Note that $V$ is continuous and piecewisely differentiable on ${\mathcal{D}_V := \mathbb{R}_0^+\backslash\{\tau_i\}_{i=0}^{M}}$.
\end{definition} \bigskip

\begin{definition} \label{def:damping_factor}
A left-continuous function $f: \mathbb{R}_0^+ \to \mathbb{R}^+$ is called a \textit{(damping) factor} if there exist $0 = \eta_0 < \dots < \eta_M < \infty$ for some ${M \in \mathbb{N}_0}$ if $f$ is continuous and either monotonically increasing or monotonically decreasing on each $(\eta_i,\eta_{i+1}]$ for ${0 \leq i \leq M-1}$ and on $(\eta_M,\infty)$.
\end{definition} \bigskip

\begin{figure}[h]
\centering
    \begin{minipage}[t]{0.48\textwidth}
        \includegraphics[width=1\textwidth]{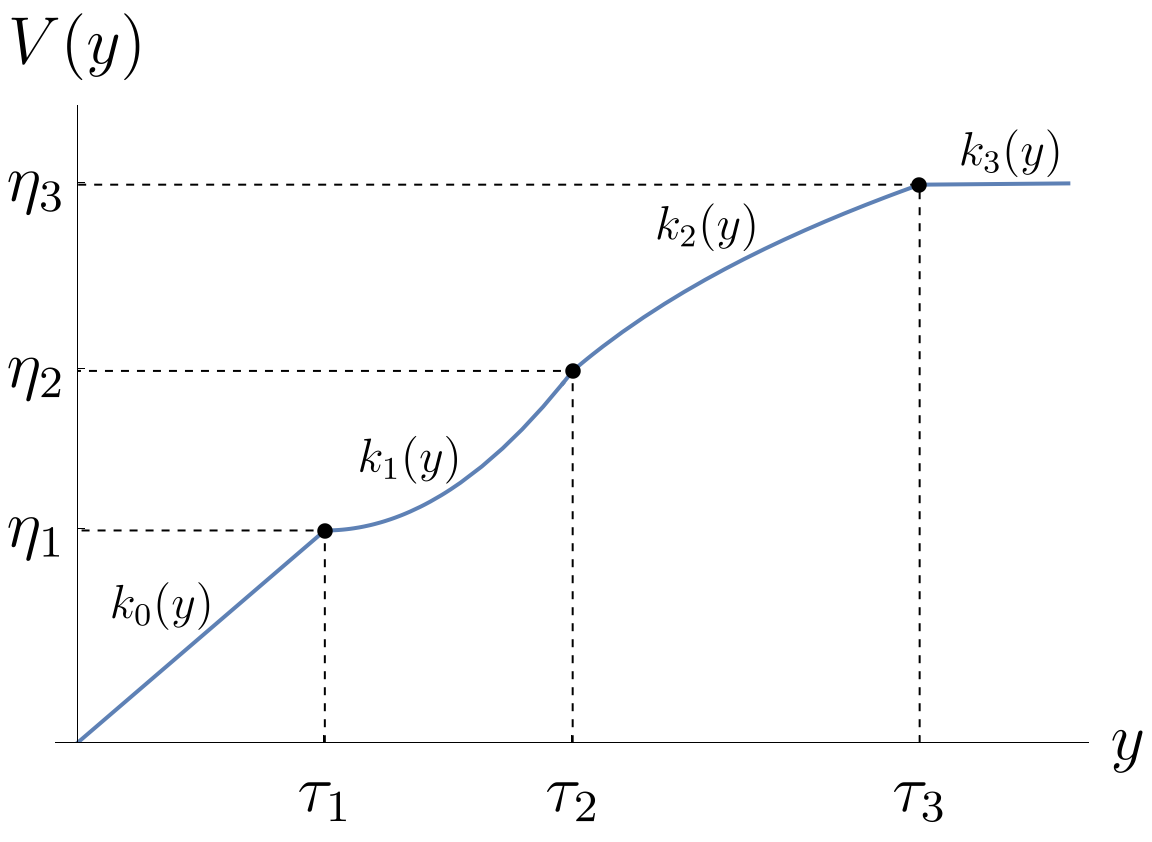}
        \caption{Total implied volatility structure, exemplary plot}
    \end{minipage} \hspace{0.3cm}
    \begin{minipage}[t]{0.48\textwidth}
        \includegraphics[width=1\textwidth]{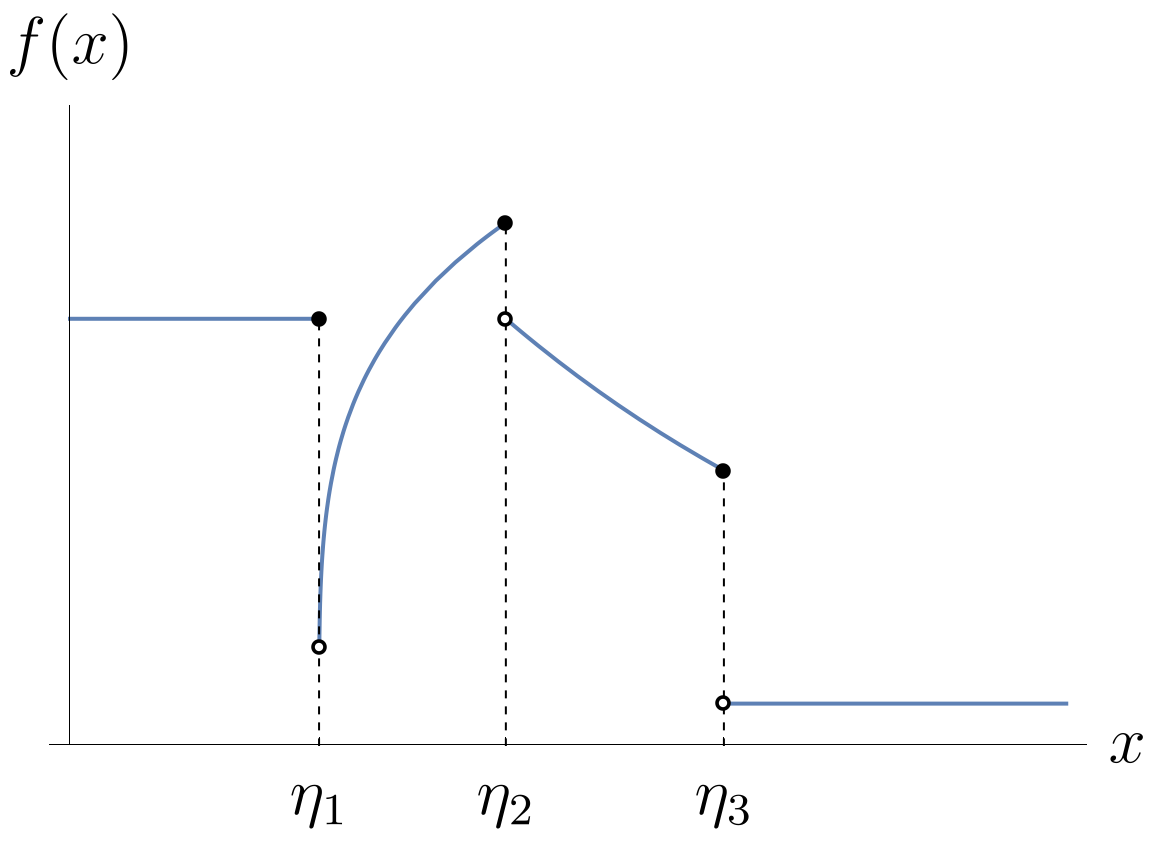}
        \caption{Damping factor, exemplary plot}
    \end{minipage}
\end{figure}

As a direct consequence of Definition \ref{def:total_implied_variance_structure} and Definition \ref{def:damping_factor}, there exists a damping factor $f$ for every total implied variance structure $V$ such that $V^{(-1)} = \int_0^\cdot f(z)^{-2} \mathrm{d}z$ and vice-versa. Indeed, given a total implied variance structure $V$ as in Definition \ref{def:total_implied_variance_structure}, one such damping factor $f$ is 
\begin{align} \label{equ:assoc_damping_factor}
    f(x) = \mathbb{1}_{\{0\}}(x) f_0 &+ \sum_{i=0}^{M-1} \mathbb{1}_{(\eta_i,\eta_{i+1}]}(x)\left(\left[\frac{\mathrm{d}}{\mathrm{d}x}k_i^{(-1)}\right]\left(x-\eta_i\right)\right)^{-\frac{1}{2}} \nonumber \\
    &+ \mathbb{1}_{(\eta_M,\infty)}(x)\left(\left[\frac{\mathrm{d}}{\mathrm{d}x}k_M^{(-1)}\right]\left(x-\eta_M\right)\right)^{-\frac{1}{2}}
\end{align} 
with $\eta_i := V(\tau_i)$ $\forall 1 \leq i \leq M$ and some $f_0 \in \mathbb{R}^+$. Usually, $f_0 := f(0+)$ and we will refer to this particular choice as \textit{the associated} damping factor. \\

In this paper, the focus is on total implied variance structures that damp the term rates' instantaneous volatilities and consist of at most three different concave segments. Furthermore, we choose the same total implied volatility structure for each term rate to extrapolate the given market data. In particular, we always set $k_0 = \text{id}_{\mathbb{R}_0^+}$ and $\tau_1 > 0$ for reasons that we explain in more detail in Subsection \ref{subsubsection:damping_thresholds}. \\
However, we note that parameterised segments such as $\left\{y \mapsto a_j y^{p_j}\right\}_{j = 1}^L$ with $a_j, p_j > 0$ for every $1 \leq j \leq L$ can be applied to fit any strictly increasing total implied volatility structure without the need to individually adjust the term rates' instantaneous volatilities by multiplicative constants. This is especially important when extrapolating market data. In this way, our framework also extends the literature on market model calibration. \\
Two examples of total implied variance structures and their associated damping factors are given below. Of course, these basic structures can be combined to more complex ones. \\

\begin{remark}
Corollary \ref{bwd_corollary_special_existence} shows that we have to consider both, $V$ and $V^{(-1)}$, in the computation of $\left\{\phi_i\right\}_{i=1}^N$.
As a result, easily invertible segments $k_0, \dots, k_M$ will be chosen in practice.
\end{remark} \bigskip

\begin{example}[adapted from \cite{Desmettre} by Desmettre et al.] \label{natural_damping} ~\\
Let $\tau_1 = \tau > 0$ and $k_1(y) = k(y) =  \frac{\tau}{2} \log\left(1 + \frac{2 y}{\tau}\right)$. This is a natural choice as $k$ is strictly monotonically increasing, concave, continuously differentiable and satisfies $k(0) = 0$ as well as ${k'(0) = 1}$. Thus, $k$ is a segment that, together with the threshold $\tau$, induces the continuously differentiable total implied variance structure $V: \mathbb{R}_0^+ \to \mathbb{R}_0^+$ given by \[ V(y) = \begin{cases}
y & \text{if }  y \leq \tau \\
\tau + k(y-\tau) & \text{if } \tau \leq y
\end{cases} = \begin{cases}
y & \text{if }  y \leq \tau, \\
\tau + \frac{\tau}{2} \, \log\left(\frac{2 y}{\tau} - 1\right) & \text{if } y > \tau.
\end{cases} \]
By (\ref{equ:assoc_damping_factor}), its associated damping factor is 
\[ f(x) = \mathbb{1}_{[0,\tau]}(x) + \mathbb{1}_{(\tau,\infty)}(x)\exp\left(-\frac{x-\tau}{\tau}\right) = \exp\left(-\frac{\left(x-\tau\right)^+}{\tau}\right).\]
\end{example} \bigskip

\begin{example}[pseudo-volatility freeze] \label{vol_freeze}
Let $\tau > 0$, $\epsilon, \delta \in [0,1)$ and \[f(x) = \begin{cases} 
1 & \text{if } x \leq \tau(1-\delta), \\
1 - (x - (1 - \delta)\tau)\frac{1 - \epsilon}{\delta\tau} & \text{if } \tau(1-\delta) < x \leq \tau, \\
\epsilon & \text{if } \tau < x.
\end{cases}\]
Again, $f$ is a damping factor and \[ H(x) = \int_0^x f(z)^{-2} \mathrm{d}z = \begin{cases}
x & \text{if } x \leq \tau(1-\delta), \\
(1 - \delta)\tau + \frac{\delta \tau (x - (1 - \delta) \ \tau)}{\tau - x (1 - \epsilon) - (1 - \delta) \epsilon \tau} & \text{if } \tau(1-\delta) < x \leq \tau, \\
\frac{x - (1 - \epsilon (\delta + \epsilon - \delta \epsilon)) \tau}{\epsilon^2} & \text{if } \tau < x.
\end{cases}\]
Its inverse is \[ V(y) = \begin{cases}
y & \text{if } y \leq \tau(1-\delta), \\
-\frac{\tau (y - y (1 - \delta) \epsilon - (1 - \delta)^2 (1 - \epsilon) \tau)}{\tau - y (1 - \epsilon) - \delta (2 - \epsilon) \tau - \epsilon \tau} & \text{if } \tau(1-\delta) < y \leq \tau + \delta \left(\frac{1}{\epsilon} - 1\right) \tau, \\
\tau + \epsilon (y \epsilon - \delta (1 - \epsilon) \tau - \epsilon \tau) & \text{if } \tau + \delta \left(\frac{1}{\epsilon} - 1\right) \tau < y.
\end{cases} \]
Here, $\tau_1 = (1-\delta)\tau$, $\tau_2 = \tau + \delta \left(\frac{1}{\epsilon} - 1\right) \tau$ are the total implied variance structure thresholds and 
\[k_1(y) = \frac{y \delta \tau}{y(1-\epsilon) + \delta \tau}, \quad k_2(y) = y \epsilon^2\] 
are the corresponding segments. This damping method is referred to as pseudo-volatility freeze since it mimics a volatility freeze for $\epsilon << 1$ and $\delta << 1$. However, unlike the industry practice of volatility freezing, the model can remain market consistent, although radically damped if the term rates' total implied variances exceed predefined thresholds.
\end{example} \bigskip

In order to mitigate the blow-ups of term rates, this section has put the focus on the damping factors so far. However, \cite{Desmettre} observed in their simulation study that the probability of simulating unrealistically high term rates under the spot measure $\mathbb{Q}^*$ can also be greatly reduced by changing the term rates' instantaneous correlation structure. Here, the maps $\left\{v_i\right\}_{i=1}^N$ that model the term rates' instantaneous correlations (see also Remark \ref{remark_factors}) provide the theoretical framework for this idea. \\

\begin{example}[decorrelation beyond a threshold] \label{decorrelation}
This choice of correlation factors is based on \cite{Desmettre}. Let $1 \leq i \leq N$ and consider a model with $d=N$-dimensional Brownian motion in the background. Moreover, let $1 \leq d_1 \leq d$ and $\left\{u_i\right\}_{i=1}^{d_1}$ be the unit vectors in $\mathbb{R}^{d_1}$. The role of the later will be discussed in the next Subsection. We call the choice of correlation factors of the form
\[ v_i(x) = \begin{cases}
\iota_N\left(u_i\right) & \text{if } x \leq \tau_1, \\
e_i &  \text{if } x > \tau_1,
\end{cases}\] \textit{decorrelation beyond a threshold} where $\iota_N: \mathbb{R}^{d_1} \to \mathbb{R}^d$ denotes the canonical embedding of $\mathbb{R}^{d_1}$ in $\mathbb{R}^d$ and $\{e_i\}_{i=1}^d$ is the canonical orthonormal basis of $\mathbb{R}^d$. Observe that the $i$-th term rate's dynamics in this example is driven by standard $d$-dimensional Brownian motion, although only standard $d_1$-dimensional Brownian motion is required if its total implied variance $x$ does not exceed the threshold $\tau_1$.
\end{example}

\subsubsection{Damping Thresholds and Market Consistency} \label{subsubsection:damping_thresholds}
In this subsection, we discuss an important aspect of the model parametrisation, i.e. the choice of damping thresholds to ensure "market consistency" when damping the term rates' instantaneous volatilities. Hereby, a market model is generally referred to as \textit{market consistent}, if the model (approximately) yields the same prices as the market for the interest rate derivatives to which it has been calibrated. Of course, this is not a rigorous mathematical definition. Mathematically, we understand this important model feature in the following way: \\

\begin{definition}
Let $\left\{L_i\right\}_{i=1}^N$ be undamped term rates that satisfy 
\begin{equation*}
\begin{cases}
\mathrm{d}L_i(t) = L_i(t) g_i(t) \left(u_i\right)^T \mathrm{d}W_t^i \quad 0 \leq t \leq T_i, \\
L_i(0) = F\left(0,T_{i-1},T_i\right) > 0,
\end{cases}
\end{equation*}
under $\mathbb{Q}^i$ for all $1 \leq i \leq N$. Moreover, let $\left\{F_i\right\}_{i=1}^N$ be the damped term rates that solve (\ref{equ:bwd_dynamics}). Then, these two models are \textit{consistent} for the tenors ${T_1 < \dots < T_K}$ (${1 \leq K \leq N}$) if the model prices of all caplets and swaptions defined on this subtenorstructure coincide.
\end{definition} \bigskip

\begin{remark}
Clearly, the two models induce different cap and swaption prices for caps and swaptions with tenors $T_{K+1} < \dots < T_N$.
\end{remark} \bigskip

As mentioned before, we choose the same total implied variance structure, $V$, for all term rates. If we assume that ${k_0 = \mathrm{id}_{\mathbb{R}^+_0}}$ and that the correlation structure of the damped model is chosen as described at the end of Subsection \ref{subsection:total_implied_volatility_structures}, we immediately conclude that the two models above are consistent iff 
\begin{equation} \label{consistency_caps}
    \tau_1 \geq \tau_{\min} := \max_{1 \leq i \leq K} \int_0^{T_i} g_i(s)^2 \mathrm{d}s
\end{equation} and 
\begin{equation} \label{consistency_swaptions}
    v_i(x) = u_i
\end{equation}
for any $x \in [0,\tau_1]$ and for all $1 \leq i \leq K$. In order to observe a damping effect, note that $\tau_1 >  0$ must satisfy \[{\tau_1 < \tau_{\max} := \int_0^{T_{N}} g(s)^2 \mathrm{d}s}.\] Although the idea presented in this Subsection sets lower and upper bounds for the first damping threshold, it does not provide any insight into what would characterise a reasonable choice of damping factors and damping thresholds. One approach to solve this problem is presented in the Appendix. \\

From a calibration perspective, this allows us to separate calibration and damping and calibrate the classical Forward market model to the given market data at first. If this model sufficiently captures the market dynamics for the tenors $T_1 < \dots < T_K$ for which cap and swaption prices are observed, so does the damped model if it satisfies (\ref{consistency_caps}) and (\ref{consistency_swaptions}). We say that such a damped model is \textit{market consistent}. 
\bigskip

\section{Calibration Results} \label{section:calibration_results}
In order to demonstrate the applicability and effectiveness of the damping approach proposed in Section \ref{sec:technical_results}, we calibrate a market model to 1-year EURIBOR market data from May 15, 2023. We then extrapolate from the given market data to simulate term rates from the damped model with maturities of 60 years in the future. It should be noted that at the time of writing, the eurozone had not yet transitioned to short-term overnight interest rates in the form of the ESTR. However, as in-arrears 1-year EURIBOR caplet volatilities equal the corresponding ESTR caplet volatilities, the market model modelling the backward-looking term rates derived from ESTR is expected to suffer from similar blow-up issues in the current market environment. The market data used to calibrate the model can be found in Tables \ref{tab:1YearFwdCurve}, \ref{tab:15Y_caplet_vols} and \ref{tab:15Y_swaption_volas}. \\

\begin{table}[ht]
\centering
\resizebox{\columnwidth}{!}{
    \begin{tabular}{lllllllllll}
        {$T$}     & 0Y      & 1Y      & 2Y      & 3Y      & 4Y      & 5Y      & 6Y      & 7Y      & 8Y      & 9Y     \\ \hline
        {$\%$} & $3.795$ & $3.062$ & $2.712$ & $2.734$ & $2.898$ & $3.023$ & $3.111$ & $3.232$ & $3.337$ & $3.405$ \\
                                  &         &         &         &         &         &         &         &         &         &         \\
        {}     & 10Y     & 11Y     & 12Y     & 13Y     & 14Y     & 15Y     & 20Y     & 25Y     & 30Y     &         \\ \hline
        {$\%$} & $3.414$ & $3.426$ & $3.348$ & $3.291$ & $3.186$ & $2.986$ & $2.293$ & $1.930$ & $1.768$ &        
    \end{tabular}
}
\caption{1-year EURIBOR forward rates, i.e. $F(0,T,T+1)$, as of May 15, 2023. Source: Refinitiv Datastream.}
\label{tab:1YearFwdCurve}
\bigskip
\centering
\resizebox{0.8\columnwidth}{!}{
    \begin{tabular}{lllllllll}
             & 1Y      & 2Y      & 3Y      & 4Y      & 5Y      & 6Y      & 7Y      & 8Y      \\ \hline
        {$\%$} & $37.64$ & $44.56$ & $45.75$ & $40.05$ & $38.06$ & $34.73$ & $33.49$ & $29.64$ \\
                                  &         &         &         &         &         &         &         &         \\
        {}     & 9Y      & 10Y     & 11Y     & 12Y     & 13Y     & 14Y     & 15Y     &         \\ \hline
        {$\%$} & $28.77$ & $27.46$ & $26.14$ & $26.35$ & $26.86$ & $26.36$ & $27.67$ &        
    \end{tabular}
}
\caption{1Y-EURIBOR ATM caplet volatilities, $\left\{\sigma^{BS}_{i}\right\}_{i=2}^{16}$, as of May 15, 2023. The index of the x-axis represents the expiry tenor $T_{i-1}$. Source: Refinitiv Datastream.}
\label{tab:15Y_caplet_vols}
\bigskip
\centering
\resizebox{\columnwidth}{!}{
    \begin{tabular}{l|llllllllll}
    \%  & 1Y      & 2Y      & 3Y      & 4Y      & 5Y      & 6Y      & 7Y      & 8Y      & 9Y      & 10Y     \\ \hline
    1Y  & $38.15$ & $42.05$ & $41.38$ & $40.8$  & $40.05$ & $39.16$ & $38.19$ & $37.26$ & $36.32$ & $35.24$ \\
    2Y  & $44.61$ & $44.84$ & $42.78$ & $41.48$ & $40.23$ & $39.15$ & $38.06$ & $37.14$ & $36.2$  & $35.26$ \\
    3Y  & $43.29$ & $42.57$ & $40.87$ & $39.49$ & $38.16$ & $37.12$ & $36.15$ & $35.29$ & $34.46$ & $33.65$ \\
    4Y  & $40.15$ & $39.77$ & $38.26$ & $36.95$ & $35.72$ & $34.81$ & $34.05$ & $33.24$ & $32.57$ & $31.89$ \\
    5Y  & $37.81$ & $37.28$ & $35.91$ & $34.6$  & $33.4$  & $32.59$ & $31.81$ & $31.27$ & $30.8$  & $30.36$ \\
    7Y  & $33.33$ & $32.85$ & $31.41$ & $30.43$ & $29.78$ & $29.14$ & $28.65$ & $28.32$ &         &         \\
    10Y & $28.77$ & $28.57$ & $27.82$ & $27.44$ & $27.12$ &         &         &         &         &        
    \end{tabular}
}
\caption{1-year EURIBOR ATM swaption volatilities, $\left\{\sigma^{BS}_{i,j-i}\right\}_{i < j}$, as of May 15, 2023. The y-axis' index represents the expiry tenor $T_i$, while the x-axis' index denotes the period to maturity, $T_j - T_i$. Source: Refinitiv Datastream.}
\label{tab:15Y_swaption_volas}
\end{table}

In our model, the forward rates are driven by the instantaneous volatilities 
\[{\sigma_i(t) = g(T_{i-1}- t) \, f\left(\phi_{i}(t)\right) \, v_i\left(\phi_{i}(t)\right) \quad \, 2 \leq i \leq 60, \ 0 \leq t \leq T_{i-1}.}\] Specifically, we choose ${g(t) = (x_1 + x_2 t + x_3 t^2) \exp\left(- x_4 t\right) + x_5,}$ which is a classical Rebonato-type principal factor (cf. \cite[Sec. 21.3]{RebonatoRiccardo}). Moreover,
\[{\phi_i(t) = V\left(\int_0^t g_i(s)^2 \mathrm{d}s\right)},\] 
$V$ is the total implied variance structure, $f$ its associated damping factor and $\left\{v_i\right\}_{i=2}^{60}$ is a set of correlation factors. The damping factors are chosen as in Examples \ref{natural_damping} and \ref{vol_freeze} with ${\epsilon = 0.01}$ and ${\delta = 0}$. In particular, this means that there is only one damping threshold $\tau_1 = \tau$ in both cases. In this chapter, we refer to this $\tau$ as "the" damping threshold. To test the effectiveness of the decorrelation-beyond-a-threshold method, $\left\{v_i\right\}_{i=2}^{60}$ are either chosen as described in Example \ref{decorrelation} with threshold $\tau$ or in such a way that they preserve the correlation structure fitted to the market data.

\subsection{Calibration} \label{subsection:calibration}
Let $\{0,1,\dots,60\}$ be the underlying tenor structure of the market models and denote by $K=15$ the index of the largest tenor which is considered in the calibration procedure. The market model of the 1-year EURIBOR is calibrated to the market data in the usual three step process: \\

\begin{enumerate}
\item To begin with, we fit the available initial forward rates, $\left\{F_i(0)\right\}_{i}$, to a Nelson-Siegel type family by means of least squares (c.f. \cite[Ch. 3.3]{Filipovic}). \\

\item Secondly, the principle factor $g$ is fitted to the market data. Hereby, we assume that the damping threshold $\tau$ is larger or equal than the minimum permissible threshold $\tau \geq 0.9551$ in accordance with the Market Consistency Criterion (\ref{consistency_caps}). This already determines the lengths of the instantaneous volatilities. \\

\item In the final step, the instantaneous correlations are fitted to the swaption data given the caplet volatilities to obtain the directions of the instantaneous volatilities. To guarantee that the correlation matrix is positive definite with positive entries and to avoid numerical instabilities when fitting the correlations to the market data, the forward rates' instantaneous correlations, $\left\{\rho_{i,j}\right\}_{1 \leq i,j \leq N}$, are parametrised by 
\begin{align*}
\rho_{i,j} = \exp\Bigg(-&\frac{|j-i|}{N-1}\Bigg(-\log\left(\rho_\infty\right)\\
&+ \eta_1 \frac{i^2 + j^2 + i\,j - 3\,N\,i - 3\,N\,j +3\,i + 3\,j +2\,N^2 -N - 4}{(N-2)(N-3)} \\
&+ \eta_2 \frac{i^2 + j^2 + i\,j - N\,i - N\,j - 3\,i - 3\,j + 3\,N + 2}{(N-2)(N-3)}\Bigg)\Bigg)
\end{align*}
for every $1 \leq i,j \leq N$ with $N=60$, \[0 \leq \eta_2 \leq 3\,\eta_1\] and \[{0 \leq \eta_1 + \eta_2 \leq -\log\left(\rho_\infty\right)}\] as proposed in \cite[Sec. 1]{Schoenmakers}. For further information regarding suitable correlation structures, the interested reader is referred to \cite[Sec. 6.9]{BrigoMercurio}. Ideally, we would like to fit the correlation structure to the swaption prices that are implied by the market model which is already fitted to caplet data at this point. Unfortunately, this would require the evaluation of swaption prices by means of Monte-Carlo simulation in every iteration of the minimisation procedure. Therefore, we apply the common approximation 
\begin{equation} \label{fwd_approx_MSF}
\left|\sigma^{\text{Swap}}_{i,j}\right|^2 \approx \sum_{p,q=i+1}^j \frac{v_p(0)v_q(0)F_p(0)F_q(0)\rho_{p,q}}{\left(R^{\text{Swap}}_{i,j}(0)\right)^2}\frac{1}{T_i}\int_0^{T_i}\left|\sigma_p(s)\right|\left|\sigma_q(s)\right|\mathrm{d}s,
\end{equation}
for $1 \leq i < j \leq 60$ (cf. \cite[Chpt. 11.5]{Filipovic}) to estimate the parameters $\{\eta_1, \eta_2, \rho_\infty\}$. Since the correlation matrix is positive definite, we find $\left\{u_i\right\}_{i=1}^{60} \subset \mathbb{R}^{60}$ such that $u_i^T u_j = \rho_{i,j}$ for every $1 \leq i,j, \leq 60$. Thus, the Market Consistency Criterion (\ref{consistency_swaptions}) reads $v_i(x) = u_i$ for every $0 \leq x \leq 0.9551$ and for all ${1 \leq i \leq 60}$. 
\end{enumerate}
\bigskip

\subsubsection{Calibration Results}
Figure \ref{fig:caplet_vols} suggests that the quoted implied caplet volatilities are well-matched by the caplet volatilities of the market model. Indeed, \[RMSE^{caplet} = 0.03885.\] The fitted parameters can be found in Table \ref{tab:least_squares_fit_Rebonato}. \\

In contrast, the swaption volatilities, that the models are calibrated to in the final step of the calibration procedure, could not be fitted as well as the caplet volatilities. Here, $RMSE^\text{swaption}$ amounts to $0.9408$. This assessment is reinforced by Figure \ref{fig:swaption_vols}. The estimated parameters of the instantaneous correlation can be found in Table \ref{tab:least_squares_fit_corr_params}. Finally, it is observed that the estimated correlations are very high, even for large differences of the forward rates' expiries which encourages interest rate blow-ups. 

\begin{table}[h]
\centering
    \begin{minipage}[t]{0.45\textwidth}
        \begin{tabular}{l|l|l|l|l}
        $x_1$     & $x_2$ & $x_3$    & $x_4$   & $x_5$    \\ \hline
        $0.02411$ & $0.0$ & $1.6393$ & $1.531$ & $0.1642$
        \end{tabular}
        \caption{Parameters of the model's instantaneous volatilities that were fitted to the caplet volatilities with expiries up to 15 years in the future.}
        \label{tab:least_squares_fit_Rebonato}
    \end{minipage} \hspace{0.3cm}
    \begin{minipage}[t]{0.45\textwidth}
        \begin{tabular}{l|l|l}
        $\eta_1$ & $\eta_2$ & $\rho_\infty$ \\ \hline
        $0.0999$ & $0.0$    & $0.9001$     
        \end{tabular}
        \caption{Fitted model calibration parameters}
        \label{tab:least_squares_fit_corr_params}
    \end{minipage}
\end{table}

\subsection{Dampening Effects}
To test the feasibility of the damping approach that is proposed in this paper, $3000$ instances of $F_{60}(59) = R(59,60)$, the interest rate that is accrued in $[59,60]$, are simulated under the spot measure $\mathbb{Q}^*$. To this end, we apply the Euler-Mayurama scheme (cf. \cite[Sec. 11.6]{Filipovic}) to simulate the logarithmic term rates: For every $1 \leq i \leq 60$, given instantaneous volatility processes $\left\{\sigma_i\right\}_{i=1}^{60}$, Itô's formula (cf. \cite[Sec. 4.1]{Filipovic}) implies
\[\mathrm{d}\log\left(F_i(t)\right) = \sigma_i(t)^T \mathrm{d}W_t^* + \sum_{j=\eta(t)+1}^i \frac{\Delta_j F_j(t)}{1+ \Delta_j F_j(t)} \sigma_i(t)^T \sigma_j(t) \mathrm{d}t - \frac{1}{2}\left|\sigma_i(t)\right|^2 \mathrm{d}t\] with $\eta(t) = l$ iff $t \in (T_{l-1},T_l]$ for $1 \leq l \leq N$ under $\mathbb{Q}^*$. Thus, given an equidistant grid $0 = t_0 < t_1 < \dots < t_N$ with ${t_{k+1} - t_k = h > 0} \ \ {\forall \, 1 \leq k \leq N}$, we can simulate $F_i$ at $t_k$ under the spot measure by 
\begin{align*}
\log\left(F_i(t_k)\right) = \log\left(F_i(t_{k-1})\right) &+  \sqrt{h}\,\sigma_i(t_{k-1})^TZ_{k-1} \\
&+ \sum_{j=\eta(t_{k-1})+1}^i \frac{\Delta_j F_j(t_{k-1})}{1+ \Delta_j F_j(t_{k-1})} \sigma_i(t_{k-1})^T \sigma_j(t_{k-1}) h \\
&- \frac{1}{2}\left|\sigma_i(t_{k-1})\right|^2 h
\end{align*}
${\forall \, 1 \leq k \leq N}$ and ${\forall \, 1 \leq i \leq 60}$. Here, $\left(Z_k\right)_{k=0}^{N-1}$ denotes a sequence of iid, $60$-dimensional, uncorrelated Gaussian vectors with standard normally distributed marginals. For the purpose of this case study, $h = \frac{1}{10}$. \\

\subsubsection{Simulation Results}
Before running large-scale simulations, we apply the ideas presented in the Appendix to predict the impact of our choice of damping thresholds: We are going to simulate $n=3000$ iid copies of the term rates $F_{60}(59)$ and we would like to choose the damping threshold $\tau$ such that the maximum of the simulated rates exceeds some $r > 0$ only with probability less than $p > 0$. For the purpose of this test, set $p = 20\%$. From Subsection \ref{subsubsection:damping_thresholds} we recall the following:
\begin{align*}
    \tau \geq \tau_\text{min} \quad &\implies \quad \text{market consistency}, \\
    \tau < \tau_\text{max} \quad &\implies \quad \text{damping effect}.
\end{align*}
In this example, $\tau_\text{min} = 0.9551$ and $\tau_\text{max} = 2.1416$. If we opt for the minimum permissible threshold $\tau_\text{min} = 0.9551$, Example \ref{appendix:damping_thresholds} shows that, for any $t \in [0,59]$, none of the $3000$ simulated values of $F_{60}(t)$ under $\mathbb{Q}^{60}$ will exceed $r_\text{min} = 0.7198$, and $r_\text{min} = 1.47549$ will be exceeded with probability $80\%$ if we apply the volatility freeze and the damping approach proposed by Desmettre et al. respectively. Furthermore, choosing $\tau_\text{max} = 2.1416$ increases this limit on the maximum of the $3000$ simulated values of $F_{60}(t)$ under $\mathbb{Q}^{60}$ to $r_\text{max} = 2.5081$. Thus, we choose the minimum permissible threshold $\tau_\text{min}$ for the simulations under $\mathbb{Q}^*$. As mentioned in Subsection \ref{subsection:total_implied_volatility_structures}, these damping methods generally reduce the model's implied caplet volatilities. A comparison of these extrapolated caplet volatilties can be found in Figure \ref{fig:caplet_vols_damped}.\\

Table \ref{tab:sim_res} presents an overview of the relative number of extreme simulated values of $R(59,60)$ and Figure \ref{fig:tail_comparison} exhibits a comparison of the interest rate's empirical distribution's tails.
 
\begin{table}[ht]
\centering
\begin{tabular}{r|lll}
$R(59,60)$                                  & $\geq 0.2$ & $\geq 0.7$ & $\geq 10^3$ \\ \hline
Undamped:                         & $0.2630$         & $0.2013$         & $0.0830$          \\
Decorrelation:                    & $0.1667$         & $0.1063$         & $0.0187$          \\
Desmettre et al.:                 & $0.2293$         & $0.1630$         & $0.0487$          \\
Desmettre et al. + Decorrelation: & $0.1730$         & $0.0937$         & $0.0163$          \\
VolFreeze:                         & $0.1460$         & $0.0770$         & $0.0083$          \\
VolFreeze + Decorrelation:        & $0.1520$         & $0.0770$         & $0.0073$         
\end{tabular}
\caption{Relative number of simulated values of $R(59,60)$ with respect to $\mathbb{Q}^*$ that exceed selected thresholds.}
\label{tab:sim_res}
\end{table}

Hereby, we find that reducing the instantaneous correlations is among the most effective methods to reduce the number of blow-ups. Note that Desmettre et al. came to the same conclusion in their simulation study (cf. \cite[Sec. 4.2]{Desmettre}). What is more, the volatility freeze and the combined method of volatility freeze and decorrelation leads to the least number of interest rate blow-ups, as expected. \\

\begin{figure}[h]
\centering
    \begin{minipage}[t]{0.48\textwidth}
        \includegraphics[width=1\textwidth]{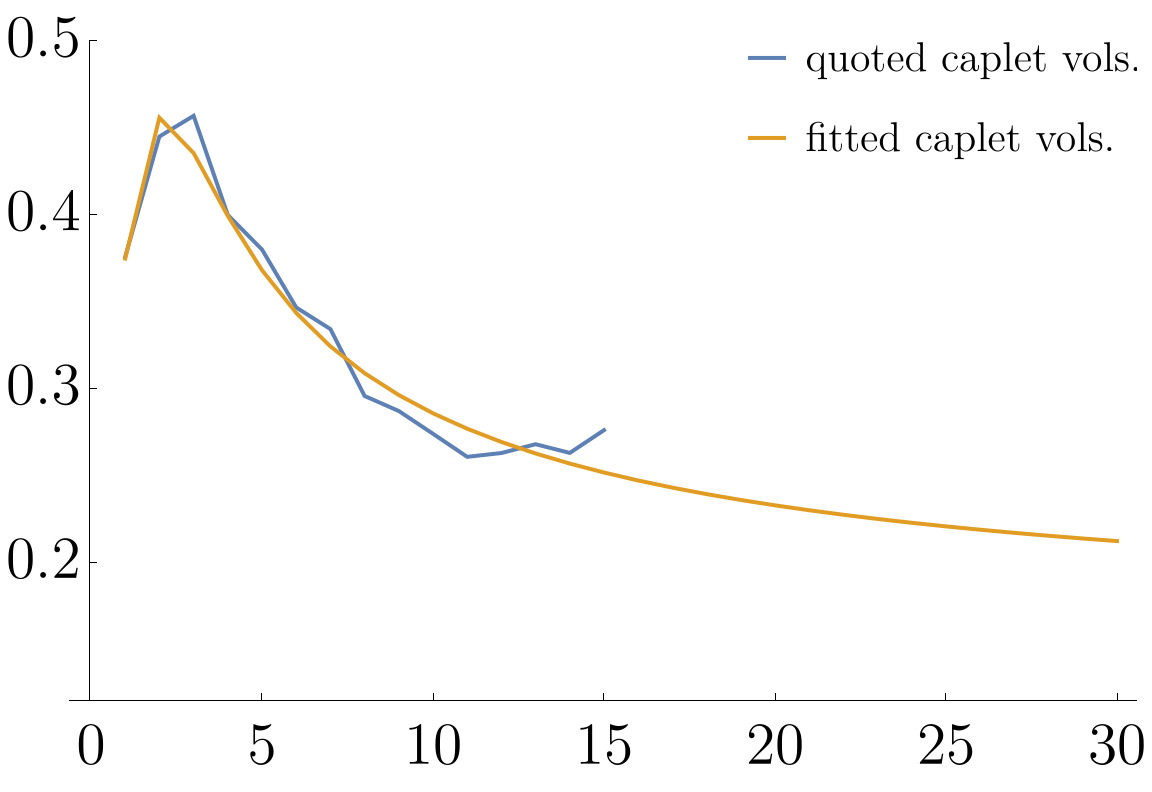}
        \caption{Model ATM caplet vol. versus quoted ATM caplet vol. with various expiries.}
        \label{fig:caplet_vols}
        \vspace{0.5cm}
    \end{minipage} \hspace{0.3cm}
    \begin{minipage}[t]{0.48\textwidth}
        \vspace{-5.53cm}
        \includegraphics[width=1\textwidth]{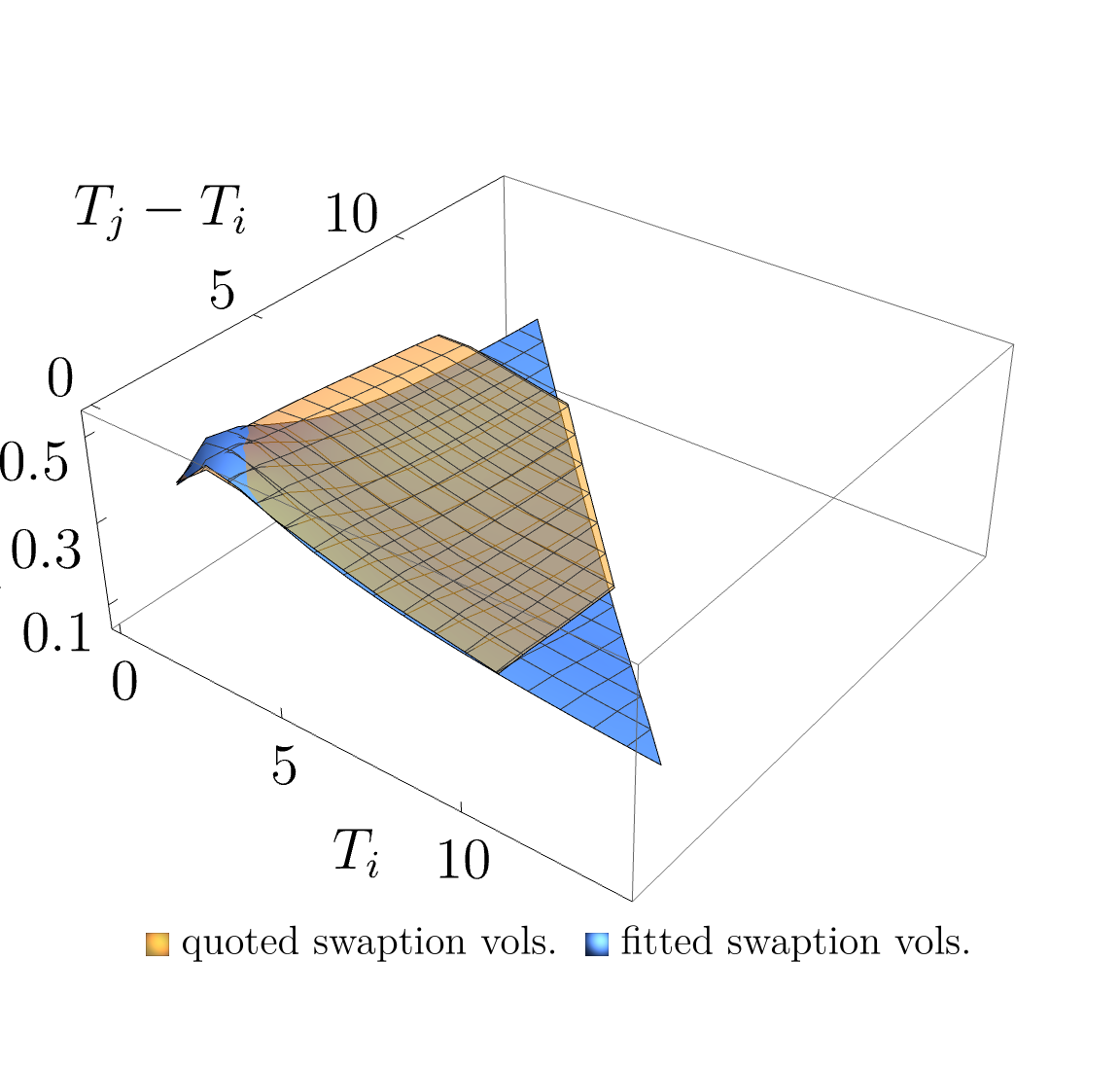}
        \vspace{-0.45cm}
        \caption{Model ATM swaption vol. versus quoted ATM swaption vol. with various expiries (x-axis) and periods to maturity (y-axis).}
        \label{fig:swaption_vols}
        \vspace{0.5cm}
    \end{minipage}
    \begin{minipage}[t]{0.48\textwidth}
        \includegraphics[width=1\textwidth]{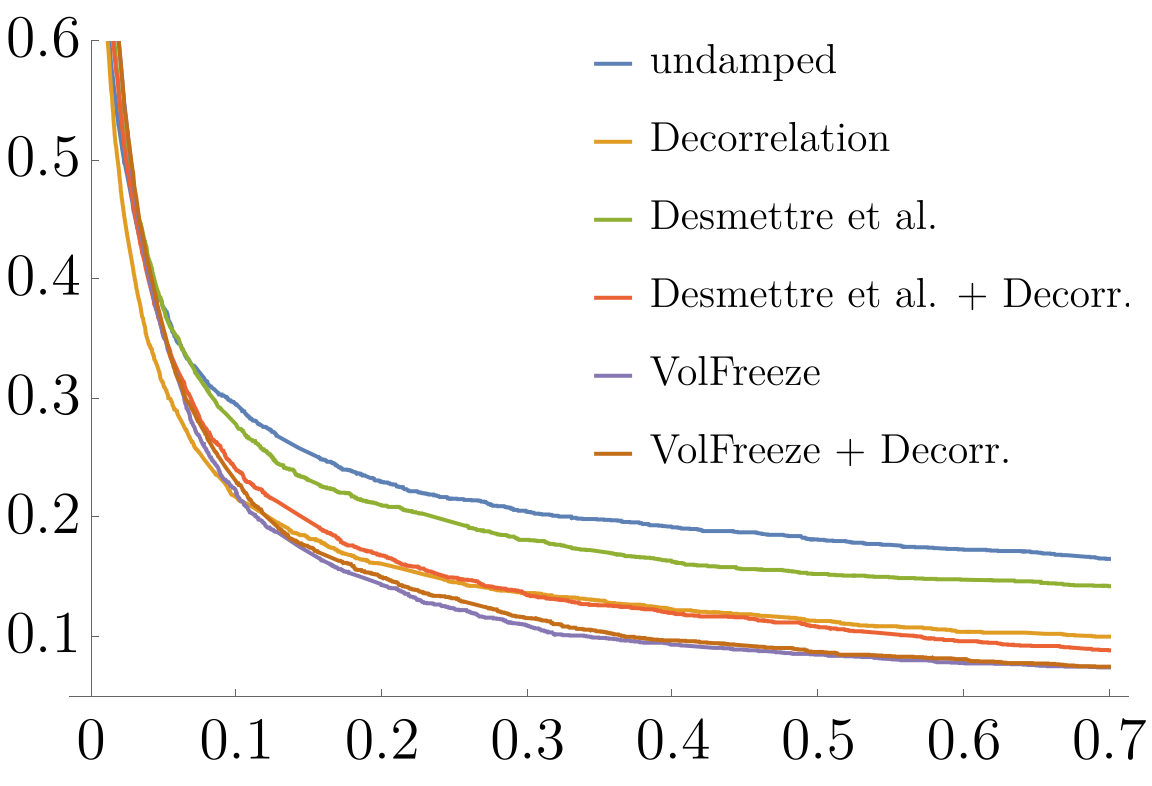}
        \caption{Empirical distribution's tails of $F_{60}(59)$ with different damping approaches.}
        \label{fig:tail_comparison}
    \end{minipage} \hspace{0.3cm}
    \begin{minipage}[t]{0.48\textwidth}
        \includegraphics[width=1\textwidth]{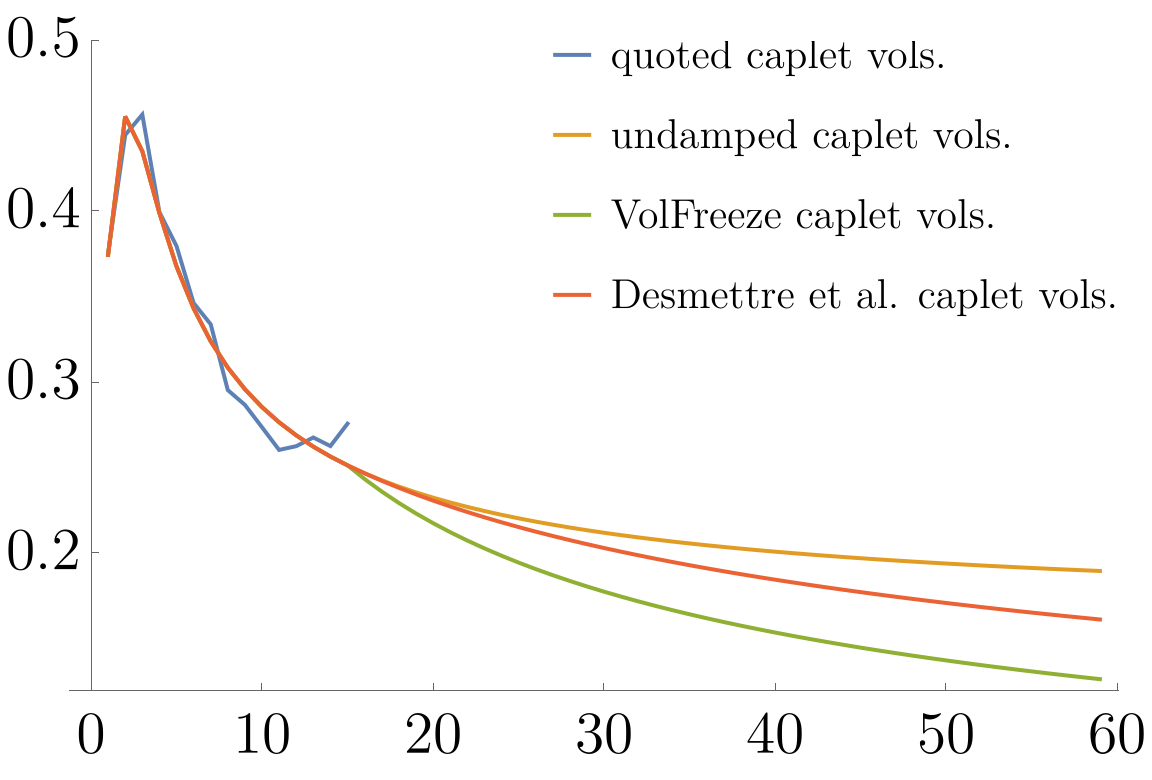}
        \caption{Extrapolated ATM caplet vol. with different damping approaches and various expiries.}
        \label{fig:caplet_vols_damped}
    \end{minipage}
\end{figure}

\bigskip

\section{Summary}
This contribution focuses on the mean-field extension of popular market models that are readily applied in the insurance industry. Since insurers are required to valuate their portfolios, some of which may be comprised of very long contracts, using arbitrage free interest rate scenarios, special emphasise is placed on the long-term extrapolation of term rate volatilities to obtain the best estimate of the market's expectation regarding future interest rate movements. \\

In particular, we extend the work of \cite{Desmettre} by developing a framework (see Section \ref{sec:technical_results}) that substantially reduces the computational effort required to simulate term rates from the given market models. Critically, this framework proves to be a minor complement to existing implementations of the Forward market model and can also be directly applied to in-arrears term rates stripped from short-term rates such as the ESTR. Furthermore, we strive to set out procedures and criteria to avoid arbitrariness with the theory developed in Subsection \ref{subsubsection:damping_thresholds}. Notably, this subsection presents criteria to avoid the radical damping or inflation of observed, valid caplet and swaption volatilities. \\

In Section \ref{section:calibration_results}, an exemplary simulation study was conducted based on the 1-year EURIBOR forward rates as of May 15, 2023 to test the feasibility of the proposed method. It is shown that the proposed damping approach considerably reduces the number of interest rate blow-ups while, crucially, preserving the term rates' martingale property. Furthermore, these results are consistent with the results of the simulation study by \cite{Desmettre}, which identify damping of the instantaneous correlations as a very effective method for reducing the number of interest rate blow-ups. In addition, the (pseudo) volatility freeze proposed in Example \ref{vol_freeze} also proves to be a very powerful, albeit radical, damping approach that could be used in combination with other, less radical damping approaches as a last resort.\\

\bigskip

\begin{appendices}
\section{Blow Up Probabilities} \label{section:appendix}
In this section, we propose a method that allows us to predict the effectiveness of a selection of damping thresholds in reducing the number of interest rate explosions before running large-scale simulations. \\

Suppose we would like to simulate $n$ iid copies $\left\{F_i^{(j)}(t)\right\}_{j=1}^n$ of the term rate $F_i(t)$ for large $i$ and $t$. Given a suitable set of segments, it might be natural to choose the damping thresholds $\left\{\tau_j^i\right\}_{j \in \mathbb{N}}$ in such a way that the maximum of the simulated term rates exceeds a threshold $r$ only with predefined probability less than $p > 0$. \\

Using the notation of Corollary \ref{bwd_corollary_special_existence}, recall that \[\log\left(F_i(t)\right) \sim \mathcal{N}\left(\log\left(F_i(0)\right) - \frac{1}{2}\int_0^t \left|\lambda_i(s)\right|^2 \, \mathrm{d}s, \ \int_0^t \left|\lambda_i(s)\right|^2 \, \mathrm{d}s\right)\] under $\mathbb{Q}^i$
and \[\phi_i(t) = \log\left(\frac{\mathbb{E}_{\mathbb{Q}^i}\left[F_i(t)^2\right]}{F_i(0)^2}\right) = \int_0^t \left|\lambda_i(s)\right|^2 \, \mathrm{d}s.\] For this reason, we have to ensure that
\begin{align*}
1 - p \leq \mathbb{Q}^i\left[\max_{1 \leq j \leq n} F_i^{(j)}(t) < r \right] = \Phi\left(\frac{\log\left(\frac{r}{F_i(0)}\right) + \frac{1}{2}\phi_i(t)}{\sqrt{\phi_i(t)}}\right)^n
\end{align*}
where $\Phi$ is the CDF of the standard normal distribution. This inequality holds iff
\begin{equation} \label{inequ:tau}
c_n(p) \sqrt{\phi_i(t)} \leq \log\left(\frac{r}{F_i(0)}\right) + \frac{1}{2}\phi_i(t)
\end{equation}
with $c_n(p) := \Phi^{(-1)}\left(\left(1-p\right)^\frac{1}{n}\right)$.
Note that 
\begin{equation} \label{appendix:root}
    x^*_i := 2\left(c_n(p)^2 -\log\left(\frac{r}{F_i(0)}\right) - \sqrt{c_n(p)^4-2 \log\left(\frac{r}{F_i(0)}\right) c_n(p)^2}\right)
\end{equation}
is the lesser of the two roots satisfying \[ c_n(p)^2 x = \left(\log\left(\frac{r}{F_i(0)}\right) + \frac{x}{2}\right)^2. \]
If we choose sufficiently small $0 < p < 1$ and sufficiently small $r > F_i(0)$, ${c_n(p)^2 > 2 \log\left(\frac{r}{F_i(0)}\right)}$. In this case, $x_i^* > 0 $ and (\ref{inequ:tau}) holds if 
\begin{equation} \label{damping_condition}
    \phi_i(t) \leq x^*_i \quad \Leftrightarrow \quad \int_{T_{i} - t}^{T_{i}} g(s)^2 \mathrm{d}s \leq V^{(-1)}( x^*_i ).
\end{equation} 
Clearly, not every choice of ${0 < p < 1}$, $r > 0$, damping thresholds and damping segments that satisfy the conditions of Subsection \ref{subsubsection:damping_thresholds} can fulfil this inequality. For this reason, it is advisable to first define the damping segments and the desired $0 < p < 1$ and only then to choose $r > 0$ and the damping thresholds according to Subsection \ref{subsubsection:damping_thresholds} and Equality (\ref{damping_condition}). Such an analysis is performed below for the damping factors introduced in Examples \ref{natural_damping} and \ref{vol_freeze}. \bigskip

\begin{example}[cont. of Examples \ref{natural_damping} and \ref{vol_freeze}] \label{appendix:damping_thresholds} ~\\
For arbitrary $\tau > 0$, fix $n >> 1$ and let \[ \left|\lambda_i(t)\right| = g\left(T_{i}-t\right) f\left(\phi_i(t)\right) \quad 1 \leq i \leq N\] where $f$ is defined as in Example \ref{natural_damping} or Example \ref{vol_freeze} with $\delta = 0$. In particular, this means that there is only one damping threshold $\tau_1 = \tau$ in both cases.
Moreover, let $K$ be the maximum tenor up to which market data is available. Here, we would even like to choose $\tau > 0$ such that ${1 - p \leq \mathbb{Q}^i\left[\max_{1 \leq j \leq n} F_i^{(j)}(t) < r \right]}$ for all $t \in [0,T_i]$ and $K < i \leq N$.
To ensure market consistency, we require \[ \tau \geq \tau_\text{min} = \max_{1 \leq i \leq K} \int_0^{T_{i}} g_i(s)^2 \mathrm{d}s = \int_0^{T_{K}} g(s)^2 \mathrm{d}s\] to fulfil the sufficient condition of Subsection \ref{subsubsection:damping_thresholds}. In order to observe a damping effect, recall that $\tau >  0$ must satisfy \[{\tau < \tau_\text{max} = \int_0^{T_{N}} g(s)^2 \mathrm{d}s}.\] 
Let $K < i \leq N$ and recall $x_i^*$ from Equation (\ref{appendix:root}). \\
In Example \ref{natural_damping}, Condition (\ref{damping_condition}) reads \[ \int_0^{T_{i}} g(s)^2 \mathrm{d}s \leq V^{(-1)}(x_i^*) = 
\begin{cases}
x_i^* & \text{if }  x_i^* \leq \tau, \\
\frac{\tau}{2} + \frac{\tau}{2} \, \exp\left(\frac{2 (x_i^*- \tau)}{\tau}\right) & \text{if } x_i^* > \tau.
\end{cases}\]
Generally, we therefore require $\tau < x_i^*$ which reduces to 
\begin{equation} \label{equation_tau_x_desmettre}
\int_0^{T_N} g(s)^2 \mathrm{d}s = \frac{\tau}{2} + \frac{\tau}{2} \, \exp\left(\frac{2 (x_i^*- \tau)}{\tau}\right)
\end{equation} 
if we would like to choose the same damping threshold $\tau$ for all $K < i \leq N$.
In Example \ref{vol_freeze}, Condition (\ref{damping_condition}) likewise reads \[ \int_0^{T_i} g(s)^2 \mathrm{d}s \leq V^{(-1)}(x_i^*) = \begin{cases}
x_i^* & \text{if } x_i^* \leq \tau, \\
\frac{x_i^* - (1 - \epsilon^2) \tau}{\epsilon^2} & \text{if } \tau < x_i^*.
\end{cases}\]
So, we require $\tau < x_i^*$ which again reduces to 
\begin{equation} \label{equation_tau_x_vol_freeze}
\int_0^{T_N} g(s)^2 \mathrm{d}s = \frac{x_i^* - (1 - \epsilon^2) \tau}{\epsilon^2}.
\end{equation} 

Given these constraints and some ${0 < p << 1}$ the question arises as to which $\tau$ and $r$ one might choose. To this end, note that the right hand sides of (\ref{equation_tau_x_desmettre}) and (\ref{equation_tau_x_vol_freeze}) decrease monotonically in $\tau$ for fixed $r$ and monotonically increase in $r$ for fixed $\tau$. These observations agree with our intuition, because the smaller $\tau$, the larger the damping effect and the smaller the (minimal) threshold rate $r$. If we denote the minimally permissible and maximally meaningful threshold rates $r$ by $r_{i, \text{min}}$ and $r_{i, \text{max}}$ respectively and interpret $x_i^*$ as a function of $r$, we obtain 
\begin{align*}
   x_i^*\left(r_{i,\text{min}}\right) &= \int_0^{T_{K}} g(s)^2 \mathrm{d}s + \frac{1}{2}\left(\int_0^{T_{K}} g(s)^2 \mathrm{d}s\right) \log\left(2\left(\frac{ \int_0^{T_i} g(s)^2 \mathrm{d}s}{\int_0^{T_{K}} g(s)^2 \mathrm{d}s}\right) -1\right), \\
   x_i^*\left(r_{i,\text{max}}\right) &= \int_0^{T_N} g(s)^2 \mathrm{d}s
\end{align*}
in Example \ref{natural_damping} and
\begin{align*}
   x_i^*\left(r_{i,\text{min}}\right) &= \epsilon^2 \int_0^{T_i} g(s)^2 \mathrm{d}s + (1 - \epsilon^2) \int_0^{T_{K}} g(s)^2 \mathrm{d}s, \\
   x_i^*\left(r_{i,\text{max}}\right) &= \int_0^{T_N} g(s)^2 \mathrm{d}s
\end{align*}
in Example \ref{vol_freeze}.
Denoting $x^*_{i,\min} := x_i^*\left(r_{i,\text{min}}\right)$ and ${x^*_{i,\max} := x_i^*\left(r_{i,\text{max}}\right)}$ and solving these equations for $r_{i,\text{min}}$ and $r_{i,\text{max}}$ yields
\[ r_{i,\text{min}} = \exp\left(c_n(p) \sqrt{x^*_{i,\min}} - \frac{x^*_{i,\min}}{2} \right) F_i(0),\]
\[ r_{i,\text{max}} =\exp\left(c_n(p) \sqrt{x^*_{i,\max}} - \frac{x^*_{i,\max}}{2} \right) F_i(0)\]
assuming $c_n(p)^2 \geq x_{i, \max}^*$. Thus, the globally minimally permissible and maximally meaningful threshold rates are $r_{\text{min}} = \underset{K < i \leq N}{\max} r_{i,\text{max}}$ respectively $r_{\text{max}} = \underset{K < i \leq N}{\min} r_{i,\text{max}}$. \\

The inequality $c_n(p)^2 \geq \underset{K < i \leq N}{\max} x_{i, \max}^*$ defines an upper bound for $p$ which is usually almost $1$, even for moderately small $n \in \mathbb{N}$. Furthermore, given such $p$, for every ${r \in \left[r_{\text{min}},r_{\text{max}}\right)}$ $x^*(r)$ is well-defined and there exists an unique $\tau(r) > 0$ such that Equations (\ref{equation_tau_x_desmettre}) and (\ref{equation_tau_x_vol_freeze}) are satisfied for all $K < i \leq N$. In the examples given, $\tau(r)$ can easily be calculated analytically.
\end{example} \bigskip
\end{appendices}

\bibliography{references-biblatex}

\end{document}